\declaretheorem{theorem}
\theoremstyle{plain}
\newtheorem{lemma}[theorem]{Lemma}
\newtheorem{fact}[theorem]{Fact}
\newtheorem{corollary}[theorem]{Corollary}
\newtheorem{remark}[theorem]{Remark}
\theoremstyle{definition}
\newtheorem{definition}[theorem]{Definition}
\newcommand{\OPT}{\mathrm{OPT}}
\newcommand{\TSP}{\mathrm{TSP}}
\newcommand{\ITP}{\mathrm{ITP}}
\newcommand{\opt}{\mathrm{opt}}
\newcommand{\dist}{\mathrm{dist}}
\newcommand{\cost}{\mathrm{cost}}
\newcommand{\rad}{\mathrm{rad}}
\newcommand{\eps}{\epsilon}
\title{Capacitated Vehicle Routing in Graphic Metrics}
\author[1]{Tobias M{\"o}mke\thanks{Partially supported by DFG Grant 439522729 (Heisenberg-Grant) and DFG Grant 439637648 (Sachbeihilfe).}}
\author[2]{Hang Zhou\thanks{Partially supported by Hi! PARIS Grant ``Efficiency in Algorithms''.}}
\affil[1]{University of Augsburg, Germany\protect \\ \texttt{moemke@informatik.uni-augsburg.de}}
\affil[2]{École Polytechnique, IP Paris, France\protect \\ \texttt{hzhou@lix.polytechnique.fr}}
\date{}
\begin{document}

\maketitle
\begin{abstract}
We study the \emph{capacitated vehicle routing problem in graphic metrics (graphic CVRP)}. Our main contribution is a new lower bound on the cost of an optimal solution. For graphic metrics, this lower bound is tight  and significantly stronger than the well-known bound for general metrics. The proof of the new lower bound is simple and combinatorial. Using this lower bound, we analyze the approximation ratio of the classical iterated tour partitioning algorithm combined with the TSP algorithms for graphic metrics of Christofides [1976], of M{\"o}mke-Svensson [JACM 2016], and of Seb{\H{o}}-Vygen [Combinatorica 2014]. In particular, we obtain a 1.95-approximation for the graphic CVRP.
\end{abstract}
\newpage{}

\section{Introduction}

Given a metric space with  a set of $n$ \emph{terminals}, a \emph{depot}, and an integer \emph{tour capacity} $k$, the \emph{capacitated vehicle routing problem (CVRP)} asks for a minimum length collection of tours starting and ending at the depot such that the number of terminals covered by each tour is at most $k$ and those tours together cover all terminals.

The CVRP was introduced by Dantzig and Ramser in 1959~\cite{dantzig1959truck}.
It is a generalization of the \emph{traveling salesman problem (TSP)} and is one of the most studied problems in Operations Research.
Books have been dedicated to vehicle routing problems, e.g.,\ \cite{toth2002vehicle,golden2008vehicle,crainic2012fleet,anbuudayasankar2016models}.
Yet, these problems remain challenging, both from a practical and a theoretical perspective.

The most popular polynomial-time approximation for the CVRP is a \emph{simple} algorithm, called \emph{iterated tour partitioning (ITP)}.
The ITP algorithm was introduced in 1985 by Haimovich and Rinnooy~Kan~\cite{haimovich1985bounds}.
Altinkemer and Gavish~\cite{altinkemer1990heuristics} showed that, in general metric spaces, the approximation ratio of the ITP algorithm  is at most $1+\left(1-\frac{1}{k}\right)\alpha$, where $\alpha\geq 1$ is the approximation ratio of a TSP algorithm.
Bompadre, Dror, and Orlin~\cite{bompadre2006improved} improved this bound to $1+\left(1-\frac{1}{k}\right)\alpha-\Omega\left(\frac{1}{k^3}\right)$.
The ratio for the CVRP in general metric spaces was recently improved by Blauth, Traub, and Vygen~\cite{blauth2021improving} to $1+\alpha-\eps$, where $\eps$ is at least $\frac{1}{3000}$.
Additionally, the best-to-date approximation ratio $\alpha$ for the TSP in general metrics is $1.5 - 10^{-36}$ by Karlin, Klein, and Oveis Gharan~\cite{karlin2021slightly}, improving upon the ratio of $1.5$ from Christofides~\cite{christofides1976worst} and Serdyukov~\cite{Ser78,BVS20}.
Consequently, the best-to-date approximation ratio for the metric CVRP stands at roughly $2.5-10^{-36}-\frac{1}{3000}$.

In this work, we focus on \emph{graphic} metrics, where the distance between two vertices is the length (i.e., number of edges) of a shortest path in a given unweighted graph.\footnote{Technically, each edge in the given graph has a cost of one and forming a shortest path metric introduces new edges of larger costs. 
    For TSP and CVRP, however, we may assume without loss of generality that such edges are replaced by a shortest path of cost-one edges.
When referring to an edge, we therefore always refer to an edge in the \emph{original} graph.
For further details we refer to the excellent survey of Vygen~\cite{Vyg12}.
}
The graphic TSP has attracted much attention. 
It captures the difficulty of the metric TSP in the sense that it is \textsf{APX} hard~\cite{GKP95} and the lower bound $\frac{4}{3}$ on the integrality gap of the Held-Karp relaxation for the metric TSP is established using an instance of the graphic TSP, see, e.g.,~\cite{Vyg12}.
%the lower bound on the integrality gap of the graphic TSP for the standard linear programming relaxation is $\frac{4}{3}$ as for the metric TSP.
Oveis Gharan, Saberi, and Singh~\cite{gharan2011randomized} gave the first approximation algorithm with ratio strictly better than 1.5 for the graphic TSP.
M{\"o}mke and Svensson~\cite{MS16} obtained a 1.461-approximation for the graphic TSP by introducing the powerful idea of \emph{removable pairings}.
Mucha~\cite{mucha2014frac} improved the analysis of~\cite{MS16}
to achieve an approximation ratio of $\frac{13}{9}\approx 1.444$.
%The best-to-date performance guarantee for the graphic TSP is a 1.4-approximation due to Seb{\H{o}} and Vygen~\cite{sebHo2014shorter}.
The best-to-date approximation for the graphic TSP is due to Seb{\H{o}} and Vygen~\cite{sebHo2014shorter}, who gave an elegant 1.4-approximation algorithm using a special kind of ear-decomposition.

We study the capacitated vehicle routing problem in \emph{graphic} metrics, called the \emph{graphic CVRP}.
The graphic CVRP is a generalization of the graphic TSP. 
Combining the graphic TSP algorithm of Seb{\H{o}} and Vygen~\cite{sebHo2014shorter} and the algorithm of Blauth, Traub, and Vygen~\cite{blauth2021improving} for the CVRP in general metrics, the best-to-date approximation ratio for the graphic CVRP stands at roughly $2.4-\frac{1}{3000}$.
In this work, we reduce the approximation ratio for the graphic CVRP to 1.95.
\begin{center}
\renewcommand{\arraystretch}{1.4}
\begin{tabular}{ |c|l|l| } 
 \hline
  & \emph{general} metrics & \emph{graphic} metrics \\\hline
 TSP & $1.5-10^{-36}$~\cite{karlin2021slightly} & 1.4~\cite{sebHo2014shorter} \\\hline
CVRP & $2.5-10^{-36}-\frac{1}{3000}$~\cite{blauth2021improving} & {\bf 1.95~[this work]}\\\hline
\end{tabular}
\end{center}
\vspace{1mm}
\subsection{Our Results}

Our main result depends on the standard definition of the \emph{radius cost} introduced by Haimovich and Rinnooy Kan~\cite{haimovich1985bounds}.

\begin{definition}[radius cost, \cite{haimovich1985bounds}]
Let $V$ be a set of terminals and let $O$ be the depot.
For every vertex $v\in V$, let $\dist(v)$ denote the $v$-to-$O$ distance in the graph.
Let $\rad$ denote the \emph{radius cost}, defined by $\rad=\frac{2}{k}\sum_{v\in V} \dist(v).$
\end{definition}

Let $\opt$ denote the cost of an optimal solution to the graphic CVRP. 
It is well-known that $\opt\geq \rad$ in general metrics~\cite{haimovich1985bounds}.

Our main technical contribution is a new lower bound on $\opt$ in graphic metrics, stated in the following Structure Theorem (\cref{thm:structure}).
The new lower bound is tight in graphic metrics, see \cref{fig:tight}.
It is significantly stronger than the known lower bound stated above.
The proof of the new lower bound, in \cref{sec:structure}, is simple and combinatorial.

\begin{theorem}[Structure Theorem]
\label{thm:structure}
Consider the graphic CVRP with a set $V$ of $n$ terminals, a depot $O$, and a tour capacity $k$.
We have
\[
    \opt\geq \rad + \frac{n}{2}-\frac{n}{2k^2}.
\]
\end{theorem}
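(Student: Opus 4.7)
My plan is to prove the bound through a per-tour inequality. For each tour $T$ in an optimal CVRP solution, with $j_T := |S_T|$ terminals and distance sum $\sigma_T := \sum_{v \in S_T}\dist(v)$, I aim to show
\[
\cost(T) \;\geq\; \frac{2\sigma_T}{k} \;+\; \frac{j_T\,(k^2-1)}{2k^2}.
\]
Once this per-tour inequality is established, summing over all tours of the optimal solution and using the identities $\sum_T j_T = n$ and $\sum_T \sigma_T = \sum_{v \in V}\dist(v) = \tfrac{k}{2}\rad$ immediately yields $\opt \geq \rad + \tfrac{n(k^2-1)}{2k^2} = \rad + \tfrac{n}{2} - \tfrac{n}{2k^2}$, which is the claim.

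For the per-tour inequality, two elementary bounds are the natural starting point. First, $\cost(T) \geq 2 D_T$, where $D_T := \max_{v \in S_T}\dist(v)$, because the closed walk must reach the farthest terminal and return. Second, $\cost(T) \geq j_T + 1$, because the tour is an Eulerian closed walk touching $j_T + 1$ distinct vertices (the depot plus the $j_T$ terminals), each of even positive degree in the underlying multigraph, which forces at least $j_T + 1$ edges. Combining the two bounds convexly with weight $\lambda = j_T/k$ on the first and $1 - j_T/k$ on the second, and using $D_T \geq \sigma_T/j_T$, already gives $\cost(T) \geq \tfrac{2\sigma_T}{k} + \tfrac{(k - j_T)(j_T+1)}{k}$, which is enough whenever $j_T$ is bounded well away from $k$.

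The hard part will be the nearly-full-tour regime $j_T \approx k$: the term $(k - j_T)(j_T+1)/k$ then collapses to $0$ while the target extra $j_T(k^2-1)/(2k^2)$ stays strictly positive. The extremal instance here is a cycle of length $k+1$ through the depot and all $k$ terminals (for $k$ odd), in which $\cost(T) = k+1 = 2D_T = j_T + 1$ and the per-tour inequality holds with equality, matching the tight example indicated by \cref{fig:tight}. To close this gap, I would refine the argument by a BFS-layer decomposition from $O$: separate the tour edges into layer-crossing edges, whose total count across all tours globally recovers $\rad$ via the standard cut argument of Haimovich--Rinnooy Kan, and lateral edges within a single BFS layer, which become unavoidable when several terminals share a common distance from $O$ and force the tour to sweep between them. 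Showing that these two contributions together supply precisely the missing slack---with equality exactly on the cycle tight example---is the principal technical challenge and the place where the graphic structure (integer distances, unit-length edges) is crucially used.
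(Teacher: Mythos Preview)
Your framework is sound: the per-tour target
\[
\cost(T)\ \geq\ \frac{2\sigma_T}{k}+\frac{j_T(k^2-1)}{2k^2}
\]
is true, and summing it does give the theorem. Your two elementary bounds $\cost(T)\geq 2D_T$ and $\cost(T)\geq j_T+1$ and their convex combination are also correct, and you correctly locate the gap at $j_T\approx k$.

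The genuine problem is your proposed fix. Decomposing tour edges into layer-crossing versus lateral edges cannot close the gap as you describe, and the tight example itself shows why: on the $(k+1)$-cycle through $O$, \emph{every} edge is layer-crossing and there are \emph{no} lateral edges, yet your per-tour target holds with equality. So lateral edges are not the source of the missing $\approx j_T/2$ term. Moreover, your argument that layer-crossing edges ``globally recover $\rad$'' only reproduces the Haimovich--Rinnooy~Kan bound and gives no control on the excess over $\rad$; it is precisely that excess that you need in the cycle example. In short, the dichotomy you propose does not separate the contributions in the way the tight instance requires.

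The paper's approach is different and supplies the missing idea. It proves the stronger, capacity-free per-tour bound
\[
\cost(T)\ \geq\ \frac{2\sigma_T}{j_T}+\frac{j_T}{2}-\frac{1}{2j_T},
\]
which implies yours and is itself tight on the cycle. The argument splits the tour at a farthest vertex (distance $R$) into two $O$-to-far paths and observes that, because each edge changes $\dist(\cdot)$ by at most one, each half must contain at least one vertex at every distance $1,\dots,R-1$. Call this multiset of $2R-1$ ``level representatives'' $A$; then $\cost(T)\geq |A|+|B|+1=2R+|B|$ with $B$ the remaining visited vertices, and $|B|\geq j_T-|U_A|$ where $U_A$ is the set of distinct terminals lying in $A$. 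The crux is a quadratic counting bound: since the elements of $A$ have distances $R,R-1,R-1,R-2,R-2,\dots$, any $|U_A|$ distinct terminals among them have total distance at least $|U_A|R-\lfloor |U_A|^2/4\rfloor$, which together with $\sigma_T=j_T(R-\Delta)$ forces $|U_A|\leq\sqrt{4j_T\Delta+1}$. Substituting and minimizing over $\Delta\geq 0$ gives exactly the bound above. This trade-off between $R-\sigma_T/j_T$ and the number of terminals that can be packed into the level representatives is the idea your proposal is missing.
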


The Structure Theorem implies easily the following theorem on the approximation ratio of the classical iterated tour partitioning algorithm for the graphic CVRP.
The proof of \cref{thm:CVRP} is in \cref{sec:proof-CVRP}.

\begin{theorem}
\label{thm:CVRP}
Consider the graphic CVRP with a set $V$ of $n$ terminals, a depot $O$, and an arbitrary tour capacity $k$.
Let $S$ be a  traveling salesman tour on $V\cup \{O\}$ of cost at most $\beta\cdot n+\gamma\cdot \opt_{\TSP}$ with $\beta\geq \frac{1}{2}$ and $\gamma\geq 0$, where $\opt_{\TSP}$ denotes the cost of an optimal TSP tour.
Then the iterated tour partitioning algorithm applied on $S$ yields a $\left(\beta+\gamma+\frac{1}{2}\right)$-approximate solution to the graphic CVRP.
\end{theorem}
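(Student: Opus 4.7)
The plan is to combine the classical averaging analysis of the iterated tour partitioning (ITP) algorithm with a convex combination of two lower bounds on $\opt$: the Structure Theorem (\cref{thm:structure}) and the elementary graphic-metric bound $\opt \geq n$.

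First, I would invoke the standard ITP analysis: cycling through the $k$ possible starting offsets for the partition of $S$ into sub-tours of at most $k$ terminals, each edge of $S$ is omitted in exactly one offset and each terminal appears as a sub-tour endpoint in exactly two offsets. Averaging and then taking the best offset yields
\[
\cost(\ITP) \;\leq\; (1-1/k)\,c(S) + \rad.
\]
Substituting the hypothesis $c(S) \leq \beta n + \gamma\,\opt_{\TSP}$ and using $\opt_{\TSP} \leq \opt$ (which follows by concatenating the sub-tours of any CVRP solution and shortcutting repeated visits to the depot via the triangle inequality), the claim reduces to showing
\[
(1-1/k)\,\beta n + \rad \;\leq\; \bigl(\beta + \tfrac{1}{2}\bigr)\, \opt.
\]

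The key step will then be to bound $\opt$ by a convex combination of two lower bounds: the Structure Theorem gives $\opt \geq \rad + n/2 - n/(2k^2)$, and in a graphic metric $\opt \geq \opt_{\TSP} \geq n+1$, since any closed walk visiting $n+1$ distinct vertices of an unweighted graph uses at least $n+1$ edges. I would combine these with weights $\alpha = (2\beta-1)/(2\beta+1) \in [0,1)$ on the graphic-metric bound and $1-\alpha = 2/(2\beta+1)$ on the Structure Theorem bound. The weight is tuned precisely so that $(1-\alpha)(\beta + 1/2) = 1$, making the coefficient of $\rad$ on the right-hand side equal to $1$ and yielding
\[
\bigl(\beta + \tfrac{1}{2}\bigr)\, \opt \;\geq\; \bigl(\beta - \tfrac{1}{2}\bigr)\,n + \rad + \tfrac{n}{2} - \tfrac{n}{2k^2}.
\]
Subtracting $\rad$ and comparing the coefficients of $n$, the remaining inequality simplifies to $\beta \geq 1/(2k)$, which holds by the assumption $\beta \geq 1/2$.

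The main obstacle is identifying the correct convex combination. The hypothesis $\beta \geq 1/2$ is precisely what forces the weight $\alpha$ to be non-negative: for $\beta = 1/2$ the argument degenerates to applying only the Structure Theorem, while for $\beta > 1/2$ the elementary bound $\opt \geq n$ becomes essential to absorb the extra $\beta n$ contribution propagated from $c(S)$ through the ITP inequality.
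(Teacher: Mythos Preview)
Your proposal is correct and is essentially the same argument as the paper's. Both proofs start from the ITP bound $\ITP(S)\leq \rad + (1-1/k)\,\cost(S)$, apply the Structure Theorem with weight $1$ and the bound $\opt\geq n$ with weight $\beta-\tfrac12$, and use $\opt_{\TSP}\leq\opt$ for the $\gamma$-term; your convex-combination parameter $\alpha=(2\beta-1)/(2\beta+1)$ is just a normalization of these same weights, and the residual inequality $\beta\geq 1/(2k)$ you arrive at is exactly the paper's observation that $\tfrac{1}{2k^2}-\tfrac{\beta}{k}\leq 0$.
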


As a consequence of \cref{thm:CVRP}, in \cref{cor:all} we bound the approximation ratio of the iterated tour partitioning algorithm combined with the TSP algorithms for graphic metrics of Christofides~\cite{christofides1976worst}, of M{\"o}mke-Svensson~\cite{MS16}, and of Seb{\H{o}}-Vygen~\cite{sebHo2014shorter}.

\begin{figure}[ht]
\centering
\includegraphics[scale=0.45]{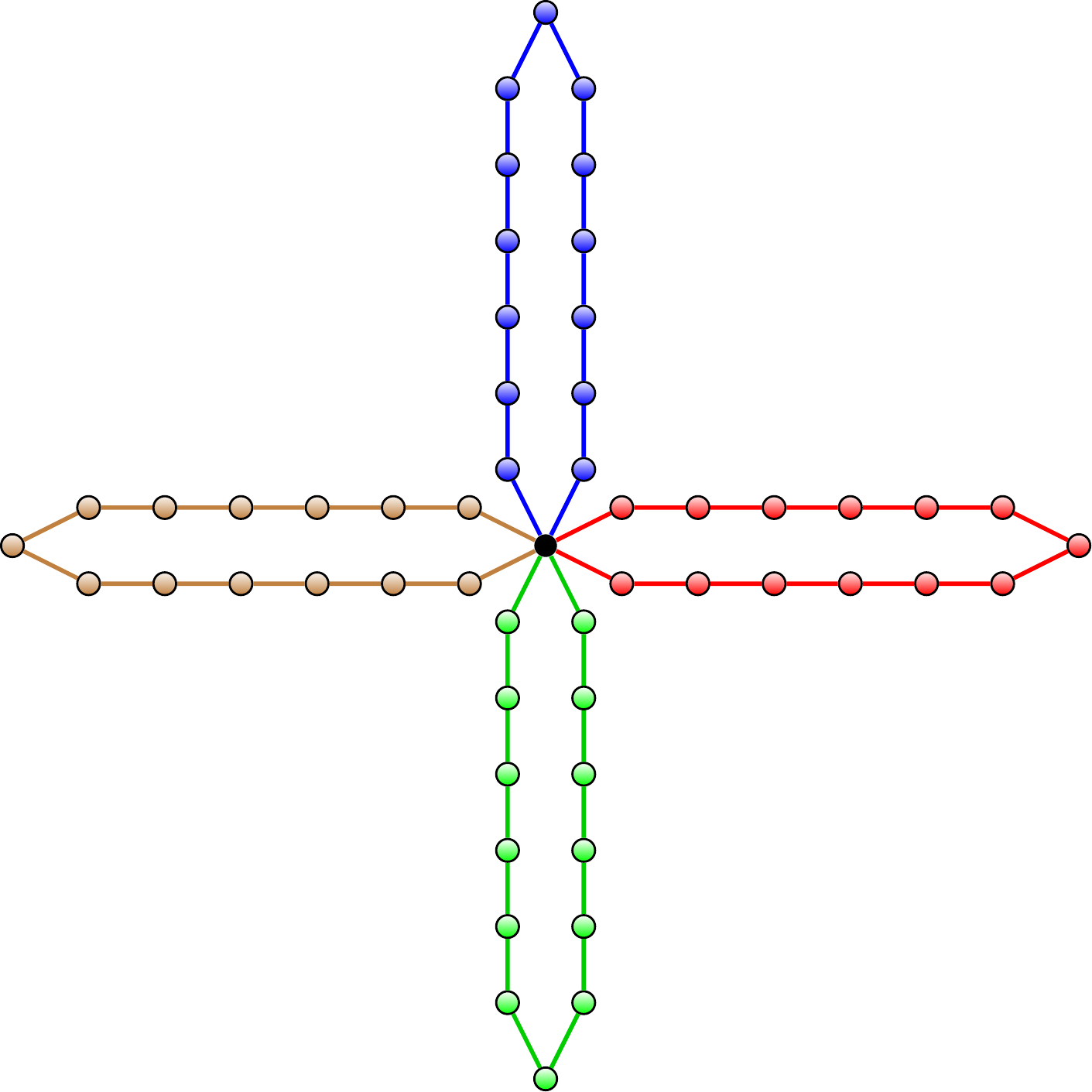}
\caption{
A tight instance for the lower bound in the Structure Theorem (\cref{thm:structure}).
Let $k$ be an odd integer. Let $n$ be a multiple of $k$.
The graph consists of $\frac{n}{k}$ cycles going through the depot (the central black node), each visiting exactly $k$ terminals. (In the example, $k=13$ and $n=52$, and the graph consists of 4 cycles.)
Each cycle consists of $k+1$ edges, so the cost $\opt$ of an optimal solution equals $(k+1)\cdot \frac{n}{k}=n+\frac{n}{k}$.
At the same time, a simple calculation gives  $\rad=\frac{n}{2}+\frac{n}{k}+\frac{n}{2k^2}$. Thus $\rad+\frac{n}{2}-\frac{n}{2k^2}=n+\frac{n}{k}$.
Hence the lower bound in the Structure Theorem is tight.
}
\label{fig:tight}
\end{figure}

\begin{corollary}
\label{cor:all}
Consider the graphic CVRP with a set $V$ of terminals, a depot $O$, and an arbitrary tour capacity $k$.
We have:
\begin{enumerate}
    \item Let $S_1$ be a traveling salesman tour on $V\cup \{O\}$ computed by the Christofides algorithm~\cite{christofides1976worst}.
    The iterated tour partitioning algorithm applied on $S_1$ yields a 2-approximate solution.
    \item Let $S_2$ be a traveling salesman tour on $V\cup \{O\}$ that is the better one of the two solutions computed by the M{\"o}mke-Svensson algorithm~\cite{MS16} and the Christofides algorithm.
    The iterated tour partitioning algorithm applied on $S_2$ yields a $\left(2-\frac{1}{24}\right)$-approximate solution.
    \item Let $S_3$ be a traveling salesman tour on $V\cup \{O\}$ that is the better one of the two solutions computed by the Seb{\H{o}}-Vygen algorithm~\cite{sebHo2014shorter} and the Christofides algorithm.
    The iterated tour partitioning algorithm applied on $S_3$ yields a $1.95$-approximate solution.
\end{enumerate}
\end{corollary}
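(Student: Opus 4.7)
The plan is, for each item $i\in\{1,2,3\}$, to derive a bound of the form $\cost(S_i)\le\beta_i\cdot n+\gamma_i\cdot\opt_{\TSP}$ with $\beta_i\ge\tfrac12$, and then invoke \cref{thm:CVRP}, which yields approximation ratio $\beta_i+\gamma_i+\tfrac12$.

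For Item~1, I would sharpen the standard Christofides bound using graphic structure. The minimum spanning tree on $V\cup\{O\}$ has cost at most $n$, since any spanning tree of the underlying unweighted connected graph on $n+1$ vertices consists of $n$ unit-cost edges. The perfect matching on the odd-degree vertices of this MST costs at most $\tfrac12\opt_{\TSP}$ by the standard shortcutting of the optimal Hamiltonian cycle. Therefore $\cost(S_1)\le n+\tfrac12\opt_{\TSP}$, so $\beta_1=1$, $\gamma_1=\tfrac12$, and \cref{thm:CVRP} gives ratio $2$.

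For Items~2 and~3, I would exploit that $S_i$ is the minimum of two tours, which means that for every $\lambda\in[0,1]$
\[
\cost(S_i)\le\lambda\cdot\cost(\mathrm{alg}_i)+(1-\lambda)\cdot\cost(S_1),
\]
a standard convex-combination trick. For Item~3, substituting the known graphic-TSP bound $\cost(\mathrm{SV})\le 1.4\cdot\opt_{\TSP}$ and the refined Christofides bound, and then taking $\lambda=\tfrac12$, yields $\cost(S_3)\le\tfrac12\, n+0.95\,\opt_{\TSP}$, so \cref{thm:CVRP} delivers ratio $0.5+0.95+0.5=1.95$.

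Item~2 is the main obstacle. Substituting only the black-box M\"omke-Svensson bound $\cost(\mathrm{MS})\le\tfrac{13}{9}\opt_{\TSP}$ and following the same recipe produces ratio $2-\tfrac{1}{36}$, strictly weaker than the claimed $2-\tfrac{1}{24}$. To recover $2-\tfrac{1}{24}$, I would use a refined M\"omke-Svensson bound of the form $\cost(\mathrm{MS})\le\alpha\cdot n+\beta\cdot\opt_{\TSP}$ with $\alpha>0$; in fact $\alpha=\tfrac13$, $\beta=\tfrac{10}{9}$ suffice, and permit the choice $\lambda=\tfrac34$ in the convex combination while still maintaining $\beta_2=\tfrac12$. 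Such a decomposition should be extractable because the MS tour is assembled from an ear-decomposition-based skeleton, whose cost is naturally linear in the number of ears (hence in~$n$), together with a $T$-join correction (bounded by a fraction of $\opt_{\TSP}$); tracking these two contributions through the removable-pairings argument should produce the required bound, after which \cref{thm:CVRP} closes the analysis.
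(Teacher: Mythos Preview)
Your proposal is correct and follows essentially the same route as the paper, including the convex-combination weights ($\lambda=\tfrac12$ for Item~3 and $\lambda=\tfrac34$ for Item~2) and the exact refined M\"omke--Svensson bound $\tfrac{n}{3}+\tfrac{10}{9}\,\opt_{\TSP}$. The one correction concerns your justification of that refined bound: it does not come from an ear-decomposition skeleton (that is the Seb\H{o}--Vygen mechanism, not M\"omke--Svensson); the paper instead combines Lemma~4.2 of~\cite{MS16}, which gives $X\le\tfrac{4}{3}n'+\tfrac{2}{3}c-\tfrac{2}{3}$ in terms of a circulation cost~$c$, with Mucha's bound $c\le\tfrac{5}{3}\,\opt_{\TSP}-\tfrac{3}{2}n'$ (Corollary~1 of~\cite{mucha2014frac}), after first reducing to the $2$-vertex-connected case by induction on the block decomposition.
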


To prove \cref{cor:all}, the requirement $\beta\geq \frac{1}{2}$ is crucial in the application of \cref{thm:CVRP}.
Intuitively, the threshold $\frac{1}{2}$ comes from the additive term $\frac{n}{2}$ in the Structure Theorem.
For the Christofides algorithm~\cite{christofides1976worst}, the requirement $\beta\geq \frac{1}{2}$  follows immediately since in a graph with $n+1$ vertices the algorithm combines a spanning tree of cost exactly $n$ with a matching of cost at most  $\frac{1}{2}\cdot \opt_{\TSP}$.
For the M{\"o}mke-Svensson algorithm~\cite{MS16} we may assume $\beta = \frac{1}{3}$, see \cref{lem:MS16};
and for the Seb{\H{o}}-Vygen algorithm~\cite{sebHo2014shorter} we may assume $\beta = 0$, see \cref{lem:Sebo-Vygen}.
Comparing their solutions with the solution of the Christofides algorithm and returning the better of the solutions lead to $\beta\geq \frac{1}{2}$.
The proof of \cref{cor:all} is in \cref{sec:proof-cor}.

\begin{remark}
In \cref{cor:all}, we analyze the Christofides algorithm but not the algorithm of Karlin, Klein, and Oveis Gharan~\cite{karlin2021slightly}, because of the simplicity of the Christofides algorithm, and also because the algorithm in \cite{karlin2021slightly} requires computing a random spanning tree whose cost might be greater than $n$ due to multiple edges.

Also note that the algorithm of Blauth, Traub, and Vygen~\cite{blauth2021improving} does not seem to yield an improved approximation for the graphic CVRP, because worst-case instances of the ITP algorithm for graphic metrics are different from those for general metrics.
\end{remark}

\paragraph{Open Questions.}
In our work, we obtain an approximation for the graphic CVRP of ratio $1.95$.
To that end, we combine the Seb{\H{o}}-Vygen algorithm~\cite{sebHo2014shorter}, the Christofides algorithm~\cite{christofides1976worst}, and the iterated tour partitioning algorithm.
The main open question to improve upon the 1.95 factor  for the graphic CVRP.
In particular, it is an interesting open question to analyze the cost of the solution computed by the Seb{\H{o}}-Vygen algorithm  in terms of the number of vertices, which might lead to an improvement over the 1.95 factor.

\subsection{Related Work}
\label{sec:related}

\paragraph{Graphic TSP.}
%\label{sec:graphTSP}
Besides the classical version of the graphic TSP mentioned previously, the \emph{$s$-$t$-path} version of the graphic TSP has also been well-studied~\cite{sebHo2014shorter,MS16,mucha2014frac,AKS15,Gao13,TV18,TVZ20}, with the best-to-date approximation ratio being $1.4+\eps$ due to Traub, Vygen, and Zenklusen~\cite{TVZ20}.

Special cases of the graphs have been studied for the graphic TSP, e.g.,
cubic graphs~\cite{GLS05,AGG18,BSSS14,MS16,CLS15,DL18,DKM17},
cubic bipartite graphs~\cite{KR16,Zuy18},
graphs of degree at most $3$ and claw-free graphs~\cite{MS16},
and graphs of degree at most $4$~\cite{New20}.

\paragraph{CVRP in Other Metrics.}
The CVRP has been extensively studied in other metrics: trees and bounded treewidth~\cite{MZ22unit,jayaprakash2021approximation,becker2019framework,becker2018tight,asano2001new}, Euclidean~\cite{haimovich1985bounds,asano1997covering,adamaszek2010ptas,das2015quasipolynomial,jayaprakash2021approximation}, planar and bounded-genus graphs~\cite{becker2017quasi,becker2019ptas,cohen2020light}, 
graphs of bounded highway dimension~\cite{becker2018polynomial}, and minor-free graphs~\cite{cohen2020light}.

% \paragraph{CVRP with Unsplittable Demands}
% A natural way to formalize CVRP is to assume the terminals to have arbitrary non-negative demands (which without loss of generality can be assumed to be integer). 
% One then has to distinguish whether the demands can be covered by several different tours (splittable demands) or only one tour (unsplittable demands). 
% If the demands are polynomially bounded, we can replace vertices by duplicates and therefore splittable demands are equivalent to unit demands -- which we consider in this paper.
% The result of Blauth et al.~\cite{blauth2021improving} mentioned before implies a $3.5-\epsilon$-approximation for metric CVRP with unsplittable demands, where $\epsilon$ is the same fixed constant as for the splittable version. 
% Recently, Friggstad et al.~\cite{friggstad2021improved} obtained a beautiful result using complementary techniques, which for some small fixed constant $\epsilon$ resulted in an approximation ratio of $\ln 2 + 2.5 - \epsilon < 3.194$. 

\paragraph{CVRP with Arbitrary Unsplittable Demands}
A natural way to generalize the unit demand version of the CVRP is to allow terminals to have arbitrary unsplittable demands, which is called the ``unsplittable'' version of the CVRP.
The first constant-factor approximation algorithm for the unsplittable CVRP in general metrics is due to Altinkemer and  Gavish~\cite{altinkemer1987heuristics} and has a ratio of $2+\alpha$, where $\alpha$ is the approximation ratio of a TSP algorithm. The approximation ratio for the unsplittable CVRP was only recently improved to $2+\alpha-2\eps$ by Blauth, Traub, and Vygen~\cite{blauth2021improving}, where $\epsilon$ is at least $\frac{1}{3000}$.
Very recently, Friggstad et al.~\cite{friggstad2021improved} further improved the ratio to roughly $\ln 2 + \alpha+1+\epsilon$ for any constant $\epsilon>0$.
This problem has also been studied on trees~\cite{MZ22unsplit} and in the Euclidean space~\cite{GMZ22}.

\paragraph{Iterated Tour Partitioning.}
The \emph{iterated tour partitioning (ITP)} algorithm is the most popular polynomial-time approximation for the CVRP, and is  very simple. 
This algorithm first computes a traveling salesman tour (ignoring the capacity constraint) using some other algorithm as a black box, then partitions the tour into segments such that the number of terminals in each segment is at most $k$, and finally, for each segment, connects the endpoints of that segment to the depot so as to make a tour.
The ITP algorithm was introduced and refined by Haimovich and Rinnooy~Kan~\cite{haimovich1985bounds} and Altinkemer and Gavish~\cite{altinkemer1990heuristics} in the 1980s.

The approximation ratio of the ITP algorithm has been well-studied, and bounds on the ratio have been utilized in the design of approximation algorithms, see, e.g.,~\cite{bompadre2006improved}.
Li and Simchi-Levi~\cite{li1990worst} showed that the ITP algorithm cannot lead to a better-than $(2-\frac{1}{k})$-approximation in general metrics.
Blauth, Traub, and Vygen~\cite{blauth2021improving} exploited properties of tight instances for the ITP algorithm, and used those properties to design the best-to-date approximation algorithm for the metric CVRP.
The performance of the ITP algorithm has also been studied in the special case when the terminals are uniform random points in the Euclidean plane~\cite{bompadre2007probabilistic,MZ21}.

Because of its simplicity, the ITP algorithm is versatile and has been adapted to vehicle routing problems in other settings, e.g., with pick-up and delivery services~\cite{mosheiov1998vehicle}, or under the constraints on the total distance traveled by each vehicle~\cite{li1992distance}.

\section{Preliminaries}

In this section, we introduce some notations and we formally define the problem. 

Let $G=(V\cup \{O\},E)$ be a connected, unweighted, and undirected graph, where $V$ is a set of \emph{terminals}, vertex $O$ is the \emph{depot}, and $E$ is the set of edges.
Let $n$ denote the number of terminals in $V$.
Let $V(G)$ denote $V\cup\{O\}$.
For each $v\in V$, let $\dist(v)$ denote the number of edges on a $v$-to-$O$ shortest path in $G$.
A \emph{tour} $T$ in $G$ is a path $z_1 z_2 \dots z_p$ for some $p\in \mathbb{N}$ such that $z_i\in V(G)$ for each $i\in[1,p]$; $z_1=z_p=O$; and $(z_i,z_{i+1})\in E$ for each $i\in [1,p-1]$.
The \emph{cost} of the tour $T$, denoted by $\cost(T)$, is the number of edges on that tour, i.e., $\cost(T)=p-1$.
Each terminal has \emph{unit demand}, which must be covered by a single tour.
Let $k\in[1,n]$ be an integer \emph{tour capacity}, i.e., each tour can cover the demand of at most $k$ terminals.
\begin{definition}[graphic CVRP]
An instance of the \emph{capacitated vehicle routing problem in graphic metrics (graphic CVRP)} consists of
\vspace{-0.2em}
\begin{itemize}
\itemsep-0.2em 
    \item a connected, unweighted, and undirected graph $G=(V\cup\{O\},E)$ where $V$ is a set of $n$ \emph{terminals} and $O$ is the \emph{depot};
    \item a positive integer \emph{tour capacity} $k$ such that $k\in[1,n]$.
\end{itemize}
\vspace{-0.1em}
A feasible solution is a set of tours such that
\vspace{-0.2em}
\begin{itemize}
\itemsep-0.2em 
    \item each tour starts and ends at $O$,
    \item each tour covers the demand of at most $k$ terminals,
    \item the demand of each terminal is covered by one tour.
\end{itemize}
The goal is to find a feasible solution such that the total cost of the tours is minimum.

\end{definition}
Let $\OPT$ denote an optimal solution to the graphic CVRP, and let $\opt$ denote the cost of $\OPT$.

%\section{Proof of the Structure Theorem (\cref{thm:structure})}
%Avoid Latex Warning
\section{Proof of the Structure Theorem (Theorem~\ref{thm:structure})}
\label{sec:structure}
In this section, we prove the Structure Theorem (\cref{thm:structure}).
The key in the analysis is the following Structure Lemma.

\begin{lemma}[Structure Lemma]
\label{lem:structure}
Let $T := z_1 z_2\dots z_p$ for some $p\in \mathbb{N}$ be a tour in $\OPT$.
Let $U\subseteq V$ denote the set of terminals whose demands are covered by $T$.
Let $D$ denote $\sum_{v\in U} \dist(v)$.
Then
\begin{equation*}
    \cost(T)\geq
    \frac{2D}{|U|}+
     \frac{|U|}{2}-\frac{1}{2|U|}.
\end{equation*}
\end{lemma}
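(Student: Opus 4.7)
My plan is to exploit the distinctness of the positions at which terminals are visited along the tour, combined with the depth constraints imposed by the distances from the depot. Write $m := |U|$ and $p := \cost(T) + 1$, so the tour $T = z_1 z_2 \cdots z_p$ consists of $p$ positions with $z_1 = z_p = O$. For each $v \in U$, let $p_v \in \{2, \ldots, p-1\}$ be any position at which $v$ appears on $T$. Since the sub-walk from $z_1 = O$ to $z_{p_v} = v$ uses $p_v - 1$ edges and the sub-walk from $z_{p_v} = v$ back to $z_p = O$ uses $p - p_v$ edges, each being at least $\dist(v)$, we obtain
\[
    p_v \in [\dist(v) + 1,\; p - \dist(v)].
\]

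The heart of the argument is a simple tail bound. For each integer $d \ge 1$, let $n_{\ge d} := |\{v \in U : \dist(v) \ge d\}|$. Every such terminal has $p_v \in [d+1,\, p-d]$, an interval containing exactly $p - 2d$ integers; since the $p_v$'s are distinct, this forces $n_{\ge d} \le p - 2d$. Trivially also $n_{\ge d} \le m$. Combining these with the identity $D = \sum_{d \ge 1} n_{\ge d}$ yields
\[
    D \;\le\; \sum_{d \ge 1} \min\bigl(m,\; p - 2d\bigr),
\]
where only $d$ with $p - 2d \ge 1$ contribute.

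To finish, I would evaluate the right-hand side in closed form. Splitting the sum at the threshold $d \approx (p-m)/2$ where the two arguments of the $\min$ switch roles — below it the summand equals $m$, above it the summand is the decreasing sequence $p - 2d$ down to $1$ or $2$ — a direct computation with a brief case analysis on the parities of $p$ and $m$ gives
\[
    \sum_{d \ge 1} \min(m,\; p - 2d) \;\le\; \frac{m(p-m)}{2} + \frac{(m-1)^2}{4}.
\]
Combining with the previous display and rearranging produces $2m(p-1) \ge 4D + m^2 - 1$, which is exactly the Structure Lemma.

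The main pitfall is the arithmetic at the very end: the Structure Lemma is tight on the cycle instances of \cref{fig:tight} (there the bound $n_{\ge d} \le p - 2d$ holds with equality for every $d$), so any slack introduced while estimating $\sum_d \min(m, p - 2d)$ would derail the conclusion, and one has to handle the parities of $p$ and $m$ carefully. The genuine combinatorial content of the proof is the one-line observation that at most $p - 2d$ terminals can sit at distance $\ge d$ from the depot.
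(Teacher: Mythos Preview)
Your argument is correct and takes a genuinely different route from the paper's.

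The paper splits the tour at a vertex of maximum depth $R$, identifies on each of the two resulting paths one representative vertex per distance level, and then bounds how many of the $|U|$ covered terminals can coincide with these $2R-1$ representatives; this requires a case split on whether $R - D/|U|$ exceeds $|U|/4$, an auxiliary counting lemma on $|U_A|$ with its own parity split, and a final algebraic inequality. Your proof bypasses all of this: the single observation that a terminal with $\dist(v)\ge d$ must sit at one of the $p-2d$ positions in $[d+1,p-d]$, together with the distinctness of the chosen positions, immediately gives $n_{\ge d}\le\min(m,p-2d)$, and the layer-cake identity $D=\sum_{d\ge1}n_{\ge d}$ finishes the job after an elementary sum. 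The closed-form bound $\sum_{d\ge1}\min(m,p-2d)\le \frac{m(p-m)}{2}+\frac{(m-1)^2}{4}$ does hold (with equality exactly when $p$ and $m$ are both odd, matching the tight cycle instance), and the parity check you flag is the only place where care is needed. One small point worth recording explicitly is that the $p_v$'s are automatically distinct because distinct terminals are distinct vertices, and that $m\le p-2$ (since the $p_v$'s lie in $\{2,\dots,p-1\}$), so the threshold index $\lfloor (p-m)/2\rfloor$ is at least $1$ and the split of the sum is non-degenerate. Overall your approach is shorter and more transparent than the paper's; the paper's decomposition has a more geometric flavor but pays for it with the two-case analysis and the auxiliary \cref{lem:U_A} and \cref{fact:Delta}.
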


In \cref{sec:struc-lemma-struc-thm}, we show how the Structure Lemma implies the Structure Theorem, and in \cref{sec:proof-lem:structure}, we prove the Structure Lemma.

\subsection{Using the Structure Lemma to prove the Structure Theorem}
\label{sec:struc-lemma-struc-thm}
Let $T_1,\dots,T_\ell$ denote the tours in $\OPT$ for some $\ell\geq 1$.
For each $i\in[1,\ell]$, let $U_i\subseteq V$ denote the set of terminals whose demands are covered by $T_i$.
Let $D_i$ denote $\sum_{v\in U_i} \dist(v)$.
We apply the Structure Lemma (\cref{lem:structure}) on $T_i$ and obtain
\begin{equation*}
    \cost(T_i)\geq
      \frac{2D_i}{|U_i|}+
     \frac{|U_i|}{2}-\frac{1}{2|U_i|}.
\end{equation*}
Summing over all tours $T_i$ and noting that $\displaystyle\sum_{i=1}^{\ell}|U_i|=n$, we have \begin{equation*}
\opt=\sum_{i=1}^\ell\cost(T_i)\geq \frac{n}{2}+\sum_{i=1}^\ell\left(\frac{2D_i}{|U_i|}-\frac{1}{2|U_i|}\right).
\end{equation*}
Define 
\[
Z:=\sum_{i=1}^\ell\left(\frac{2D_i}{|U_i|}-\frac{1}{2|U_i|}\right)- \rad+\frac{n}{2k^2}.\]

To show the claim in the Structure Theorem, it suffices to show that $Z$ is non-negative.
Using $\displaystyle\rad=\frac{\sum_{i=1}^{\ell} 2D_i}{k}$ and $\displaystyle n=\sum_{i=1}^{\ell} |U_i|$, we have
\[Z=\sum_{i=1}^{\ell} \left(\frac{2D_i}{|U_i|}-\frac{1}{2|U_i|}-\frac{2D_i}{k}+\frac{|U_i|}{2k^2}\right)=\sum_{i=1}^\ell \frac{2k(k-|U_i|)(2D_i-1)+(k-|U_i|)^2}{2k^2|U_i|}\geq 0,\]
where the inequality follows from $|U_i|\leq k$ and $D_i\geq 1$ (since $\dist(v)\geq 1$ for each $v\in U_i$).

This completes the proof of the Structure Theorem (\cref{thm:structure}).
%\end{proof}

%\subsection{Proof of the Structure Lemma (\cref{lem:structure})}
%Avoid Latx Warning
\subsection{Proof of the Structure Lemma (Lemma~\ref{lem:structure})}
\label{sec:proof-lem:structure}

Let $W$ be a multi-set of terminals defined by \[W := \{z_i\mid i\in[1,p] \text{ and } z_i\neq O\}.\]
Let $R := \max_{v\in W} \dist(v)$.
Let $\Delta := R-\frac{D}{|U|}$.
It is easy to see that $\Delta\geq 0$.

If $\Delta\geq \frac{|U|}{4}$, we have  \[\displaystyle\cost(T)\geq 2R\geq \frac{|U|}{2}+\frac{2D}{|U|},\] the claim follows.

In the rest of the proof, we assume that $\Delta<\frac{|U|}{4}$.
Let $j\in[1,p]$ be such that  $\dist(z_j)=R$, breaking ties arbitrarily.
We split the tour $T$ at vertex $z_j$, obtaining two paths $P_1:=z_1 z_2\dots z_j$ and $P_2:=z_j z_{j+1}\dots z_p$.
For each $i\in[1,R-1]$, let $x_i$ (resp.\ $y_i$) be a vertex on $P_1$ (resp.\ $P_2$), such that $\dist(x_i)$ (resp.\ $\dist(y_i)$) equals $i$; breaking ties arbitrarily.
Let $A$ denote the multi-set of vertices $\{z_j,x_1,\dots, x_{R-1},y_1,\dots,y_{R-1}\}$.
Observe that $A\subseteq W$.
Define another multi-set $B:=W\setminus A$.
Let $U_A\subseteq U$ denote the set of vertices $u\in U$ such that $u$ has at least one occurrence in $A$.
Let $U_B := U\setminus U_A$.
Observe that each element in $U_B$ has at least one occurrence in $B$, so \[|B|\geq |U_B|=|U|-|U_A|.\]
From the definition of $W$ and since $O\notin W$, we have
\[\cost(T)\geq |W|+1=|A|+|B|+1=2\cdot R+|B|.\]
Noting that $R=\frac{D}{|U|}+\Delta$, we have
\begin{equation}
\label{eqn:cost-T-1}
\cost(T)\geq \frac{2D}{|U|} +2\Delta+|U|-|U_A|.
\end{equation}

To lower bound $\cost(T)$, we upper bound $|U_A|$ in the following lemma, whose proof is elementary.
\begin{lemma}
\label{lem:U_A}
$|U_A|\leq \sqrt{4|U|\cdot \Delta+1}$.
\end{lemma}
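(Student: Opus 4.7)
The plan is to exploit the rigid distance profile of the multi-set $A$ together with the fact that $|U|\Delta = R|U|-D$ measures the total ``slack'' of the distances in $U$ from the maximum $R$. Intuitively, $U_A$ can only contain terminals whose distances match those realized by the two shortest-path spines $x_1,\ldots,x_{R-1}$ and $y_1,\ldots,y_{R-1}$ together with $z_j$, and these distances appear with bounded multiplicity; hence a large $|U_A|$ forces many vertices of $U$ to be far from $R$, which forces $\Delta$ to be large.

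The key observation is the distance profile of $A$: exactly one vertex ($z_j$) has distance $R$, and for each $i\in[1,R-1]$, at most two distinct vertices in $A$ (namely $x_i$ and $y_i$) have distance $i$. Consequently, writing $a:=|U_A|$, the distances $\{\dist(v)\}_{v\in U_A}$ form a multi-set contained in $\{1,\ldots,R\}$ in which each value in $[1,R-1]$ appears at most twice and the value $R$ appears at most once. Over all such configurations of $a$ distinct values, the quantity $\sum_{v\in U_A}(R-\dist(v))$ is minimized by choosing the highest-distance slots first: $z_j$ (contribution $0$), then pairs at $R-1,R-2,\ldots$. A short case check on the parity of $a$ shows the minimum equals $\lfloor a^2/4\rfloor$, and even when $z_j\notin U$ the minimum is only larger. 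In particular,
\[
    \sum_{v\in U_A}(R-\dist(v)) \;\geq\; \lfloor a^2/4\rfloor \;\geq\; \frac{a^2-1}{4}.
\]

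To finish, I would observe that every $v\in U$ lies on $T$ and hence in $W$, so $\dist(v)\leq R$ and thus $\sum_{v\in U_B}(R-\dist(v))\geq 0$. Combining with the previous bound gives
\[
    |U|\,\Delta \;=\; R|U|-D \;=\; \sum_{v\in U_A}(R-\dist(v)) + \sum_{v\in U_B}(R-\dist(v)) \;\geq\; \frac{a^2-1}{4},
\]
from which $a^2\leq 4|U|\Delta+1$, i.e.\ $|U_A|\leq \sqrt{4|U|\,\Delta+1}$, as required.

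The main obstacle is the discrete optimization establishing the $\lfloor a^2/4\rfloor$ lower bound. This is not deep, but it must be handled carefully because the parity of $a$ matters and because one has to confirm that removing the option of distance $R$ (when $z_j\notin U$) only increases the minimum, so the bound remains valid in every case. Once this combinatorial step is settled, the remainder is a one-line algebraic manipulation of $|U|\Delta$.
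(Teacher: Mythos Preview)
Your proposal is correct and follows essentially the same argument as the paper. The paper splits into the two parity cases $|U_A|=2m$ and $|U_A|=2m+1$, upper-bounds $\sum_{v\in U_A}\dist(v)$ by the maximal possible sum over $A$'s distance profile, adds the trivial bound $\sum_{v\in U_B}\dist(v)\le |U_B|\cdot R$, and solves for $m$; your formulation in terms of the slack $\sum_{v\in U}(R-\dist(v))=|U|\Delta$ and the unified bound $\lfloor a^2/4\rfloor\ge (a^2-1)/4$ is just a cleaner packaging of the same two cases.
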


\begin{proof}
First, consider the case when $|U_A|$ is even. Suppose $|U_A|=2m$ for some $m\in \mathbb{N}$.
Since $U_A\subseteq A$ and using the definition of $A$, we have
\[\sum_{v\in U_A}\dist(v)\leq R+ 2(R-1)+2(R-2)+\cdots+2(R-m+1)+(R-m)=2m\cdot R-m^2.\]
Since $\dist(v)\leq R$ for each $v\in U_B\subseteq W$, we have
\[\sum_{v\in U_B}\dist(v)\leq |U_B|\cdot R=(|U|-|U_A|)\cdot R= |U|\cdot R-2m\cdot R.\]
Thus \[D=\sum_{v\in U}\dist(v)\leq |U|\cdot R-m^2.\]
Since $D=|U|\cdot (R-\Delta)$, we have \[m\leq \sqrt{|U|\cdot \Delta}.\] 
The claim follows since $|U_A|=2m$.

Next, consider the case when $|U_A|$ is odd.
Suppose $|U_A|=2m+1$ for some $m\in \mathbb{N}$.
Since $U_A\subseteq A$ and using the definition of $A$, we have
\[\sum_{v\in U_A}\dist(v)\leq R+2(R-1)+2(R-2)+\dots + 2(R-m)= (2m+1)\cdot R-m(m+1).\]
Since $\dist(v)\leq R$ for each $v\in U_B\subseteq W$, we have
\[\sum_{v\in U_B}\dist(v)\leq |U_B|\cdot R= (|U|-|U_A|)\cdot R=|U|\cdot R-(2m+1)\cdot R.\]
Thus \[D=\sum_{v\in U}\dist(v)\leq |U|\cdot R-m(m+1).\]
Since $D=|U|\cdot (R-\Delta)$, we have \[m\leq \frac{\sqrt{4|U|\cdot \Delta+1}-1}{2}.\] 
The claim follows since $|U_A|=2m+1$.
\end{proof}

From \eqref{eqn:cost-T-1} and \cref{lem:U_A}, we have
\begin{equation}
\label{eqn:cost-T-2}
    \cost(T)\geq \frac{2D}{|U|}+2\Delta+|U|-\sqrt{4|U|\cdot \Delta+1}.
\end{equation}
To lower bound $\cost(T)$, we use the following simple fact. 
\begin{fact}
\label{fact:Delta}
\begin{equation*}
     2\Delta-\sqrt{4|U|\cdot \Delta+1}\geq -\frac{|U|}{2}-\frac{1}{2|U|}.
\end{equation*}
\end{fact}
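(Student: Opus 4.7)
The plan is to verify the inequality by algebraic manipulation, specifically by reducing it to a perfect square. The statement is an inequality between a linear expression in $\Delta$ and a square root expression in $\Delta$, so the natural first move is to isolate the square root and square both sides.

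First I would rewrite the inequality in the equivalent form
\[
    2\Delta + \frac{|U|}{2} + \frac{1}{2|U|} \;\geq\; \sqrt{4|U|\cdot \Delta + 1}.
\]
The left-hand side is manifestly positive (recall $\Delta\geq 0$ and $|U|\geq 1$), and the right-hand side is nonnegative, so squaring is a reversible step. After squaring, the goal reduces to showing that a certain polynomial in $\Delta$ and $|U|$ is nonnegative.

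The cleanest way to handle the resulting polynomial is to introduce the substitution $s := \sqrt{4|U|\cdot \Delta + 1}$, so that $4|U|\cdot \Delta = s^2 - 1$ and hence $2\Delta = (s^2-1)/(2|U|)$. Plugging this into the squared inequality and clearing denominators (multiply through by $2|U|$), everything collapses to
\[
    s^2 - 2|U|\, s + |U|^2 \;\geq\; 0,
\]
which is just $(s - |U|)^2 \geq 0$ and therefore trivially holds.

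I do not anticipate any real obstacle here: once the inequality is rearranged so that the square root stands alone, the whole statement is essentially a completed square in disguise, and the constants $\tfrac{|U|}{2}$ and $\tfrac{1}{2|U|}$ on the right of the fact are exactly what is needed to make the cross terms balance. The only care required is to check that squaring is legitimate (both sides nonnegative) and that the substitution $s = \sqrt{4|U|\Delta+1}$ is well defined, both of which are immediate.
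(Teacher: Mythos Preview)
Your approach is correct and essentially identical to the paper's: both substitute $s=\sqrt{4|U|\Delta+1}$ (the paper calls it $t$) and reduce the inequality to the perfect square $(s-|U|)^2\geq 0$. Note that the preliminary squaring step you describe is in fact unnecessary---once you substitute $2\Delta=(s^2-1)/(2|U|)$ into the \emph{unsquared} inequality and multiply through by $2|U|$, you already obtain $s^2-2|U|s+|U|^2\geq 0$ directly, which is exactly what the paper does.
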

\begin{proof}
Define \[Y:=2\Delta-\sqrt{4|U|\cdot \Delta+1}+\frac{|U|}{2}+\frac{1}{2|U|}.\]
It suffices to show that $Y\geq 0$.
Let $t:=\sqrt{4|U|\cdot \Delta+1}$.
Thus $\displaystyle\Delta=\frac{t^2-1}{4|U|}$.
We have \[Y=2\cdot\frac{t^2-1}{4|U|}-t+\frac{|U|}{2}+\frac{1}{2|U|}=\left(\frac{t}{\sqrt{2|U|}}-\sqrt{\frac{|U|}{2}}\right)^2\geq 0.\]
\end{proof}

From \eqref{eqn:cost-T-2} and \cref{fact:Delta}, we conclude that
\[\cost(T)\geq \frac{2D}{|U|}+\frac{|U|}{2}-\frac{1}{2|U|}.\]
This completes the proof of the Structure Lemma (\cref{lem:structure}).

%\begin{remark}
%We claim that lower bound in \cref{lem:structure} is tight.
%To see that, suppose $k=2m+1$ for some $m\in \mathbb{N}$.
%Consider a tour $T$ visiting $k$ terminals, all of which are distinct, and whose demands are covered by $T$, and such that the vertices on the tour $T$ do not have incident edges in the graph besides the edges on $T$.
%We have
%\[\cost(T)=2(m+1).\]
%We also have
%$\sum_{v\in T}\dist(v)=m+2(m-1)+2(m-2)+\cdots+4+2=(m+1)^2$.
%Using $|U|=k=2m+1$, the lower bound in \cref{lem:structure} equals \[\frac{2(m+1)^2}{2m+1}+\frac{2m+1}{2}-\frac{1}{2(2m+1)}=2(m+1).\]
%So the lower bound in \cref{lem:structure} is tight. 
%\end{remark}

\section{Proof of Theorem~\ref{thm:CVRP}}
%Avoid latex warning
%\section{Proof of \cref{thm:CVRP}}
\label{sec:proof-CVRP}
Let $S$ be a traveling salesman tour on $V\cup \{O\}$.
Let $\ITP(S)$ denote the cost of the solution computed by the iterated tour partitioning algorithm applied on $S$. 
Altinkemer and Gavish~\cite{altinkemer1990heuristics} showed the following bound. 

\begin{lemma}[\cite{altinkemer1990heuristics}]
\label{lem:ITP-upper-bound}
$\ITP(S)\leq \rad+\left(1-\frac{1}{k}\right)\cdot\cost(S).$
\end{lemma}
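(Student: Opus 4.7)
The plan is to run a simple averaging argument over $k$ shifted versions of the tour partition. First, fix an orientation of $S$ and label the terminals in the order they are visited as $v_1, v_2, \ldots, v_n$, so that $S$ corresponds to $O, v_1, v_2, \ldots, v_n, O$. For each shift $s \in \{0, 1, \ldots, k-1\}$, I would cut $S$ after every terminal $v_i$ with $i \equiv s \pmod k$, together with the implicit cuts before $v_1$ and after $v_n$. Each resulting segment then contains at most $k$ consecutive terminals and therefore induces a feasible ITP tour whose cost, for a segment with first and last terminals $v_a, v_b$, equals $\dist(v_a) + \ell_S(v_a,v_b) + \dist(v_b)$, where $\ell_S$ denotes length along the sub-path of $S$.

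Next I would sum these $k$ solution costs and count contributions terminal-by-terminal and edge-by-edge. For the endpoint contributions, an interior terminal $v_j$ (with $2 \leq j \leq n-1$) is the first terminal of a segment in exactly one shift and the last terminal of a segment in exactly one shift, while $v_1$ is always a segment-start and $v_n$ is always a segment-end. For the path contributions, each internal edge $e_i=(v_i,v_{i+1})$ of $S$ is interior to a segment in $k-1$ of the shifts and is omitted in exactly one. Combining these counts with the identity $\sum_{v \in V}\dist(v) = (k/2)\cdot \rad$, the bookkeeping should collapse to
\[
    \sum_{s=0}^{k-1} \ITP_s \;=\; k\cdot \rad \;+\; (k-1)\cdot \cost(S).
\]
Since the ITP algorithm is free to select the cheapest of the $k$ shifted solutions, the minimum is at most the average, yielding $\ITP(S) \leq \rad + (1-1/k)\cdot \cost(S)$, as claimed.

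The main obstacle I expect is simply the boundary bookkeeping: the endpoint terminals $v_1, v_n$ and the outermost $S$-edges $(O, v_1)$ and $(v_n, O)$ are counted asymmetrically from the interior terms. One has to verify that the extra $(k-1)(\dist(v_1) + \dist(v_n))$ produced by the first/last count exactly compensates for the fact that the path-cost count only sees $\cost(S) - \dist(v_1) - \dist(v_n)$, so that $\cost(S)$ emerges with the clean coefficient $k-1$ in the final sum. Apart from this minor accounting, the argument is purely combinatorial and uses no feature of the graphic metric specifically, which is consistent with the lemma being stated for general metrics in~\cite{altinkemer1990heuristics}.
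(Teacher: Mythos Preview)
The paper does not prove this lemma at all; it is simply quoted from Altinkemer and Gavish~\cite{altinkemer1990heuristics} and used as a black box, so there is no in-paper argument to compare against. Your averaging argument over the $k$ cyclic shifts is precisely the classical proof of this bound, and it is correct. In particular, the boundary bookkeeping you flag resolves exactly as you predict: across the $k$ shifts each interior terminal is a segment endpoint twice, while $v_1$ and $v_n$ are endpoints $k+1$ times each, giving the extra $(k-1)(\dist(v_1)+\dist(v_n))$; since the two outermost $S$-edges have metric length exactly $\dist(v_1)$ and $\dist(v_n)$ and are never interior to any segment, the path-cost count contributes $(k-1)\big(\cost(S)-\dist(v_1)-\dist(v_n)\big)$, and the two terms combine to the clean $(k-1)\cdot\cost(S)$.
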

From \cref{lem:ITP-upper-bound} and the assumption on $\cost(S)$, we have 
\begin{align*}
    \ITP(S)&\leq \rad +\left(1-\frac{1}{k}\right)\cdot\left(\beta\cdot n+\gamma\cdot \opt_{\TSP}\right)\\
    &= \left(\rad+\frac{n}{2}-\frac{n}{2k^2} \right) + \left(\beta-\frac{1}{2}+\frac{1}{2k^2}-\frac{\beta}{k}\right)\cdot n+\left(1-\frac{1}{k}\right)\cdot\gamma\cdot\opt_{\TSP}.
\end{align*}
From the Structure Theorem (\cref{thm:structure}), 
\[\rad+\frac{n}{2}-\frac{n}{2k^2}\leq \opt.\]
Since $\beta\geq \frac{1}{2}$ and $k\geq 1$, we have \[\frac{1}{2k^2}-\frac{\beta}{k}\leq 0.\]
Combining, we have
\[\ITP(S)\leq \opt + \left(\beta-\frac{1}{2}\right)\cdot n+\gamma\cdot\opt_{\TSP}\leq\left(\beta+\gamma+\frac{1}{2}\right)\cdot \opt,\]
where the last inequality follows from $n\leq \opt$ and $\opt_{\TSP}\leq \opt$.

This completes the proof of \cref{thm:CVRP}.

\section{Proof of Corollary~\ref{cor:all}}

\label{sec:proof-cor}
In this section, we analyze the approximation ratios for the graphic CVRP using known algorithms for the graphic TSP.

\begin{lemma}[adaptation from Christofides~\cite{christofides1976worst}]
\label{lem:Christofides}
Given a connected graph $G$ with $n+1$ vertices, the Christofides algorithm computes a traveling salesman tour of cost at most $n + \frac{1}{2}\cdot\opt_{\TSP}$.
\end{lemma}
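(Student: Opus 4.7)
The plan is to retrace the three classical stages of the Christofides algorithm (minimum spanning tree, minimum-weight perfect matching on odd-degree vertices, Eulerian shortcutting) and to track the cost contribution of each stage \emph{in the graphic metric}. The only adaptation relative to the general metric analysis is the first stage: I would exploit that $G$ is connected with $n+1$ vertices, so any spanning tree of $G$ uses $n$ edges, each of cost $1$ in the graphic metric. Hence the minimum spanning tree $T$ has cost at most $n$ (and in fact equal to $n$, but the inequality is all that is needed).

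Next, I would handle the matching stage by the standard shortcutting argument. Let $S$ be the set of odd-degree vertices of $T$; by the handshake lemma, $|S|$ is even. Fix an optimal TSP tour $T^\ast$ on $V(G)$ of cost $\opt_{\TSP}$, and shortcut it to visit only the vertices of $S$; the triangle inequality (valid in the graphic metric) guarantees that the resulting closed walk on $S$ has cost at most $\opt_{\TSP}$. This walk decomposes into two edge-disjoint perfect matchings on $S$, and the cheaper of them has cost at most $\tfrac{1}{2}\opt_{\TSP}$. Consequently, the minimum-weight perfect matching $M$ on $S$ that the algorithm computes satisfies $\cost(M)\le \tfrac{1}{2}\opt_{\TSP}$.

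Finally, I would combine the two stages: the multigraph $T\cup M$ is connected with all vertices of even degree, so it admits an Eulerian circuit of cost $\cost(T)+\cost(M)\le n+\tfrac{1}{2}\opt_{\TSP}$. Shortcutting this Eulerian circuit (again using the triangle inequality in the graphic metric) yields a Hamiltonian tour on $V(G)$ of cost at most $n+\tfrac{1}{2}\opt_{\TSP}$, which is precisely the bound claimed.

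There is no real obstacle here; the lemma is essentially just a specialization of the classical Christofides analysis to the graphic setting. The one subtlety to be explicit about is the footnote in the introduction: although the graphic metric formally contains long-cost edges, every such edge may be replaced by a shortest path of unit-cost edges, so both the MST computation and the shortcutting steps produce objects whose cost in the metric matches the number of edges used in the original graph, making the $n$ upper bound on $\cost(T)$ rigorous.
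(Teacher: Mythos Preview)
Your proposal is correct and follows essentially the same approach as the paper: observe that the minimum spanning tree has exactly $n$ edges of unit cost, bound the matching by $\tfrac{1}{2}\cdot\opt_{\TSP}$ via the standard shortcutting argument, and combine. The paper's own proof is terser (it simply cites \cite{christofides1976worst} for the matching bound rather than spelling out the two-matching decomposition), but the content is identical.
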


\begin{proof}
Christofides algorithm~\cite{christofides1976worst} computes a minimum spanning tree $T$ and adds a minimum-cost perfect matching $M$ of the odd degree vertices as parity correction.
Since $G$ is connected, the cost of $T$ is exactly $n$.
By~\cite{christofides1976worst}, the cost of $M$ is at most $\frac{1}{2}\cdot\opt_{\TSP}$.
The claim follows.
\end{proof}

The first claim in \cref{cor:all} follows from \cref{thm:CVRP,lem:Christofides}.

\begin{lemma}[adaptation from Mucha~\cite{mucha2014frac} and Mömke and Svensson~\cite{MS16}]
\label{lem:MS16}
Given a connected graph $G$ with $n+1$ vertices, the Mömke-Svensson algorithm computes a traveling salesman tour of cost at most
$\frac{n}{3}+\frac{10}{9}\cdot\opt_{\TSP}$.
\end{lemma}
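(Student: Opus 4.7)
The plan is to revisit the Mucha/Mömke–Svensson analysis of graphic TSP and identify the precise step at which the bound $n \leq \opt_{\TSP}$ is invoked; by refraining from that substitution we obtain the sharper inequality with an explicit additive $n/3$ term. The motivating observation is that Mucha's bound $\frac{13}{9}\opt_{\TSP}$ splits as $\frac{1}{3}\opt_{\TSP} + \frac{10}{9}\opt_{\TSP}$, and that the first summand can plausibly be replaced by $\frac{n}{3}$ using the additional structural fact that the spanning subgraph produced by the algorithm always contains a spanning tree of $G$, which has exactly $n$ edges in graphic metric.

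First, I would reconstruct the high-level structure of the Mucha analysis. The Mömke–Svensson algorithm builds a 2-vertex-connected spanning subgraph $H$ of $G$ from an optimal Held–Karp LP solution $x^*$ (of value at most $\opt_{\TSP}$), equips $H$ with a specially structured ear decomposition, uses the removable-pairings technique to identify a set $R$ of edges whose removal keeps the resulting multigraph connected, and corrects parity with a matching $M$; shortcutting the resulting Eulerian multigraph yields a tour of cost at most $|E(H)| - |R| + |M|$. Second, I would open up the detailed bookkeeping for each of $|E(H)|$, $|R|$, and $|M|$ in Mucha's case analysis. Mucha's $\frac{13}{9}$ bound is assembled from a sum of terms, some proportional to $x^*(E)$ and others proportional to edge counts of a spanning tree of $H$; in Mucha's write-up these latter terms are bounded by $x^*(E)$ via $n \leq x^*(E)$ to produce a single $\frac{13}{9}x^*(E)$ expression. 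By instead leaving the spanning-tree terms as $\frac{n}{3}$ and bounding everything else by $\frac{10}{9}\,x^*(E) \leq \frac{10}{9}\opt_{\TSP}$, the claim $\mathrm{cost} \leq \frac{n}{3} + \frac{10}{9}\opt_{\TSP}$ follows. As a sanity check, substituting $n \leq \opt_{\TSP}$ recovers Mucha's $\frac{13}{9}\opt_{\TSP}$ bound, confirming consistency.

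The hard part will be faithfully tracing Mucha's intricate case analysis over short ears, long ears, and pendant ears, and verifying that the coefficient attached to the ``spanning-tree part'' of $H$ is indeed exactly $\frac{1}{3}$ in the final accounting. Concretely, this amounts to choosing, for each ear in the decomposition, a distinguished set of ``tree'' edges (one fewer than the ear's internal vertices) and confirming that in each sub-case of the removable-pairings argument these edges can be charged separately at the reduced rate. Once that case-by-case propagation is verified, the rest of the argument is just re-statement: since $H$ spans $G$, its spanning-tree contribution equals $n$, and we stop short of the final substitution $n \leq \opt_{\TSP}$ that Mucha uses to produce a single $x^*(E)$-based expression.
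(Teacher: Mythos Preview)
Your high-level instinct---that Mucha's $\tfrac{13}{9}\opt_{\TSP}$ bound secretly decomposes as $\tfrac{1}{3}n + \tfrac{10}{9}\opt_{\TSP}$ before the final substitution $n\le\opt_{\TSP}$---is exactly right, and this is also what the paper exploits. However, your execution plan is far more laborious than necessary and also misses a step.

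The paper does not reopen the removable-pairings case analysis at all. Instead it observes that the required intermediate statements in the literature \emph{already} keep the vertex count explicit. With $n'=n+1$ vertices and assuming $G$ is $2$-vertex-connected, Lemma~4.2 of M\"omke--Svensson states
\[
X \le \tfrac{4}{3}n' + \tfrac{2}{3}c - \tfrac{2}{3},
\]
where $c$ is the cost of a certain circulation, and Corollary~1 of Mucha states
\[
c \le \tfrac{5}{3}\opt_{\TSP} - \tfrac{3}{2}n'.
\]
Substituting one into the other gives $X\le \tfrac{n'}{3}+\tfrac{10}{9}\opt_{\TSP}-\tfrac{2}{3}$ immediately---no ear-by-ear accounting needed. (Incidentally, the M\"omke--Svensson framework is built on a DFS tree and a circulation, not on an ear decomposition plus a parity-correcting matching $M$ as you describe; you appear to be blending in features of the Seb\H{o}--Vygen algorithm.)

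The genuine gap in your plan is the $2$-vertex-connectivity assumption. Both cited bounds above require $G$ to be $2$-vertex-connected; an arbitrary connected $G$ need not be. The paper handles this by an induction on the block structure: if $v$ is a cut vertex, split $G$ into $G_1,G_2$ sharing only $v$, apply the bound (in the form with the $-\tfrac{2}{3}$ constant) to each piece, and add. Since $|V_1|+|V_2|=n'+1$ and $\opt_{\TSP}=\opt_1+\opt_2$, the two $-\tfrac{2}{3}$ terms absorb the extra $+\tfrac{1}{3}$ from the shared vertex. Your proposal never addresses this reduction, so as written it would only establish the lemma for $2$-vertex-connected graphs.
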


\begin{proof}   
Letting $n'=n+1$, it suffices to show that, given a connected graph $G$ with $n'$ vertices, the Mömke-Svensson algorithm computes a traveling salesman tour of cost at most $\frac{n'}{3}+\frac{10}{9}\cdot\opt_{\TSP}-\frac{2}{3}$.

The proof is by induction.
Let $X$ denote the cost of the traveling salesman tour computed by the Mömke-Svensson algorithm.

First, consider the case when $G$ is $2$-vertex-connected.
The Mömke-Svensson algorithm computes some \emph{circulation} of cost $c$ and they show in Lemma~4.2 of \cite{MS16}:
\[X\leq \frac{4}{3}\cdot n'+ \frac{2}{3}\cdot c - \frac{2}{3}.\]
Mucha shows in Corollary~1 of \cite{mucha2014frac}:
\[c\leq \frac{5}{3}\cdot\text{opt}_{\TSP} - \frac{3}{2}\cdot n'.\] The claim follows.
    
Next, consider the case when $G$ is not $2$-vertex-connected. 
There is a vertex $v$ such that removing $v$ disconnects the graph.
We therefore identify two subgraphs $G_1 = (V_1,E_1)$ and $G_2 = (V_2,E_2)$
such that $V_1 \cup V_2 = V(G)$, $V_1 \cap V_2 = \{v\}$ and $E = E_1 \cup E_2$.
Observe that $|V_1|+|V_2|=n'+1.$
From the induction, the claim holds for both $G_1$ and $G_2$.
Let $\text{opt}_1$ (resp.\ $\text{opt}_2$) be the optimal cost of a TSP solution in $G_1$ (resp.\ $G_2$).
Then $\opt_{\TSP}=\opt_1+\opt_2$.
We have
\[X \leq  \left(\frac{|V_1|}{3}+\frac{10}{9}\cdot\text{opt}_1-\frac{2}{3}\right) + \left(\frac{|V_2|}{3}+\frac{10}{9}\cdot\text{opt}_2-\frac{2}{3}\right) 
= \frac{n'+1}{3}+ \frac{10}{9}\cdot \opt_{\TSP}-\frac{4}{3},\]
which yields the claim.
\end{proof}

From \cref{lem:Christofides,lem:MS16}, and since $S_2$ is the better one of the two solutions computed by the Christofides algorithm and by the Mömke-Svensson algorithm, we have 
\begin{equation}
\label{eqn:S2}
\cost(S_2)\leq \frac{1}{4}\cdot\left(n+\frac{1}{2}\cdot\opt_{\TSP}\right)+\frac{3}{4}\cdot\left(\frac{n}{3}+\frac{10}{9}\cdot\opt_{\TSP}\right)=\frac{1}{2}\cdot n+\frac{23}{24}\cdot\opt_{\TSP}.
\end{equation}

The second claim in \cref{cor:all} follows from \cref{thm:CVRP} and \eqref{eqn:S2}.

\begin{lemma}[Seb\H{o} and Vygen~\cite{sebHo2014shorter}]
\label{lem:Sebo-Vygen}
Given a connected graph $G$ with $n+1$ vertices, the Seb\H{o}-Vygen algorithm computes a traveling salesman tour  of cost at most $\frac{7}{5}\cdot\opt_{\TSP}$. 
\end{lemma}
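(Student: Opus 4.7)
The plan is to apply directly the main theorem of Seb\H{o} and Vygen~\cite{sebHo2014shorter}, which is precisely the stated $7/5$ approximation for the graphic TSP. The purpose of isolating the lemma here is to make the parameters explicit for the subsequent application in \cref{cor:all}: in the notation of \cref{thm:CVRP}, their bound corresponds to $\beta = 0$ and $\gamma = 7/5$, so that the entire cost of the computed tour is charged against $\opt_{\TSP}$ with no additive $n$-term. This is in contrast to \cref{lem:Christofides} (where $\beta = 1$, $\gamma = 1/2$) and \cref{lem:MS16} (where $\beta = 1/3$, $\gamma = 10/9$), and it is precisely this absence of an $n$-term that makes the Seb\H{o}-Vygen tour the decisive ingredient in the third item of \cref{cor:all}.

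For orientation, I would briefly recall the three ingredients of the Seb\H{o}-Vygen approach. First, reduce to the 2-vertex-connected case by splitting at cut vertices, an induction entirely parallel to the one used in the proof of \cref{lem:MS16}. Second, compute a \emph{nice ear decomposition} of the resulting 2-vertex-connected graph, whose existence is the technical heart of~\cite{sebHo2014shorter} and which carefully controls the number of short (especially length-two) ears. Third, use this ear decomposition to build an Eulerian multigraph and shortcut it to a Hamiltonian tour; the $7/5$ bound then drops out of a careful edge-counting argument relating the number of ears of each parity and length to $\opt_{\TSP}$.

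The main obstacle in producing a self-contained proof is the construction and analysis of the nice ear decomposition together with the ensuing accounting, which is essentially the entire substance of~\cite{sebHo2014shorter}. Since our contribution in the present paper lies elsewhere, namely in the Structure Theorem (\cref{thm:structure}) and its consequences through \cref{thm:CVRP}, we simply invoke their main theorem as a black box and do not attempt to reproduce their argument here.
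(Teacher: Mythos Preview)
Your proposal is correct and matches the paper's own treatment: the paper simply states \cref{lem:Sebo-Vygen} as a direct citation of~\cite{sebHo2014shorter} with no proof at all, exactly as you propose to invoke it as a black box. Your additional contextual remarks about the parameters $\beta=0$, $\gamma=7/5$ and the outline of the nice-ear-decomposition method are accurate and more expansive than anything the paper includes for this lemma.
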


From \cref{lem:Christofides,lem:Sebo-Vygen}, and since $S_3$ is the better one of the two solutions computed by the Christofides algorithm and by the Seb\H{o}-Vygen algorithm, we have 
\begin{equation}
\label{eqn:S3}
\cost(S_3)\leq \frac{1}{2}\cdot\left(n+\frac{1}{2}\cdot\opt_{\TSP}\right)+\frac{1}{2}\cdot\left(\frac{7}{5}\cdot\opt_{\TSP}\right)=\frac{1}{2}\cdot n+\frac{19}{20}\cdot\opt_{\TSP}.
\end{equation}

The last claim in \cref{cor:all} follows from \cref{thm:CVRP} and \eqref{eqn:S3}.

\paragraph{Acknowledgments.}
We thank Zipei Nie for helpful discussions in mathematics.
\bibliographystyle{alphaabbr}
\bibliography{references}

\end{document}